\pgfplotsset{
  grid style = {
    dash pattern = on 0.025mm off 0.95mm on 0.025mm off 0mm, 
    line cap = round,
    black,
    line width = 0.5pt
  },
  tick label style={font=\small},
  label style={font=\small},
  legend style={font=\footnotesize},
}
\def\arraystretch{1.0}%
\newtheorem{definition}{Definition}
\newtheorem{theorem}{Theorem}
\newtheorem{proposition}{Proposition}
\newtheorem{corollary}{Corollary}
\newtheorem{remark}{Remark}
\newacronym{CP}{CP}{content provider}
\newacronym{D2D}{D2D}{device-to-device}
\newacronym{HetNet}{HetNet}{heterogeneous network}
\newacronym{MBS}{MBS}{macro base station}
\newacronym{MNO}{MNO}{mobile network operator}
\newacronym{SBS}{SBS}{small base station}
\newacronym{UE}{UE}{user equipment}
\newacronym{QoS}{QoS}{quality-of-service}
\newacronym{QoE}{QoE}{quality-of-experience}
\newacronym{GNE}{GNE}{generalized Nash equilibrium}
\newacronym{NE}{NE}{Nash equilibrium}
\newacronym{BR}{BR}{best response}
\begin{document}
\title{A Stackelberg Game for Incentive Proactive Caching Mechanisms in Wireless Networks \vspace{-0.3cm}}
\author{
		\IEEEauthorblockN{Fei Shen$^{\diamond}$, Kenza Hamidouche$^{\diamond}$, Ejder Baştuğ$^{\dagger, \diamond}$ and Mérouane Debbah$^{\diamond, \circ}$\\ }
		\IEEEauthorblockA{
				\vspace{-0.1cm}
				\small
				$^{\diamond}$Large Networks and Systems Group (LANEAS), CentraleSupélec, \\ Université Paris-Saclay, 3 rue Joliot-Curie,  91192 Gif-sur-Yvette, France \\	
				$^{\dagger}$Research Laboratory of Electronics, Massachusetts Institute of Technology, \\ 77 Massachusetts Avenue, Cambridge, MA 02139, USA \\
				$^{\circ}$Mathematical and Algorithmic Sciences Lab, Huawei France R\&D, Paris, France \\
				\{fei.shen, kenza.hamidouche\}@centralesupelec.fr, ejder@mit.edu, merouane.debbah@huawei.com	
\vspace{-1.0cm}		}
		\thanks{This research has been supported by the ERC Starting Grant 305123 MORE (Advanced Mathematical Tools for Complex Network Engineering), the  U.S.  NSF Grant CCF-1409228, and the projects 4GinVitro and BESTCOM.}
}
\IEEEoverridecommandlockouts
\maketitle
\begin{abstract}
In this paper, an incentive proactive cache mechanism in cache-enabled small cell networks (SCNs) is proposed, in order to motivate the content providers (CPs) to participate in the caching procedure. A network composed of a single mobile network operator (MNO) and multiple CPs is considered. The MNO aims to define the price it charges the CPs to maximize its revenue while the CPs compete to determine the number of files they cache at the MNO's small base stations (SBSs) to improve the quality of service (QoS) of their users. This problem is formulated as a Stackelberg game where a single MNO is considered as the leader and the multiple CPs willing to cache files are the followers. The followers game is modeled as a non-cooperative game and both the existence and uniqueness of a Nash equilibrium (NE) are proved. The closed-form expression of the NE which corresponds to the amount of storage each CP requests from the MNO is derived. An optimization problem is formulated at the MNO side to determine the optimal price that the MNO should charge the CPs. Simulation results show that at the equilibrium, the MNO and CPs can all achieve a utility that is up to $50$\% higher than the cases in which the prices and storage quantities are requested arbitrarily.
\end{abstract}
\vspace{-0.1cm}
\section{Introduction}
The ever increasing number of mobile phones and connected devices is expected to contribute to an $800$\% increase in mobile data traffic in the five upcoming years \cite{Cisco}. Bandwidth intensive applications such as video-on-demand  traffic will represent more than $70$\% of the global generated data requests \cite{Cisco}. To support this growing traffic and offload macro base stations, short range \glspl{SBS} are deployed closer to the end-users. However, these \glspl{SBS} are connected to a core network via capacity-limited backhaul links which makes it difficult to meet users' requirements in terms of \ac{QoS}, especially during peak hours. To deal with this problem, distributed caching at the network edge has recently been proposed as a promising solution \cite{Bastug2014LivingOnTheEdge}.

The idea of distributed caching consists in equipping the \glspl{SBS} with storage units in which files are cached according to a placement policy. Thus, the \glspl{SBS} can serve most of the requests locally without using the backhaul. However, for a successful deployment of proactive edge caching, the \glspl{MNO} require cooperation of the \glspl{CP} to be able to cache their content at the \glspl{SBS} \cite{Pachos2016Technical}. To this end, incentive mechanisms must be developed by the operators to incite \glspl{CP} to share and cache their content. The \ac{MNO} offers to the \glspl{CP} the caching service that allows the \glspl{CP}' users to improve their \ac{QoS} and in return the \glspl{CP} pay a price defined by the \ac{MNO} depending on the amount of storage space requested by each \ac{CP}.

Recently, several caching works have appeared from different aspects, such as optimal caching policies in layered video delivery \cite{Poularakis2016Caching}, \ac{D2D} networks \cite{Gregori2016Wireless}, hieararchical caching \cite{Ghoreishi2016Provisioning}, multi-cell scenario with limited-backhaul \cite{Peng2016Cache} and so on (see \cite{Pachos2016Technical} for detailed discussions). More relevantly, there exist some works focusing on economic aspect of caching in wireless \ac{D2D} networks. In such networks, the operators define pricing scheme to motivate  users to proactively download the most popular files and cache them in their devices to serve other users' requests. In \cite{Alotaibi2015Towards}, the authors proposed a smart pricing scheme to maximize the benefit of the operator and minimize the charged price to the users. Via \ac{D2D} communications, users can trade their cached files to minimize their expected payments. On the other hand, the operator defines a dynamic pricing model that differentiates off-peak and peak time periods to maximize its own benefit. The authors in \cite{Chen2016CachingStackelberg} formulated the cache incentive problem as a Stackelberg game in which the \glspl{SBS} are the leaders and the users are the followers. In this model, the \glspl{SBS} start by anticipating users' reactions and determine the optimal price that maximizes their offloaded traffic from to the users devices. Given a defined reward by the \glspl{SBS}, users can then decide whether to help the \glspl{SBS} by caching the files and serving other users or not.

Despite being interesting, all these works focused on cache incentive mechanisms in \ac{D2D} wireless networks and ignored the role of \glspl{CP} in the caching process. 
The main contribution of this work is to propose a new cache incentive mechanism between an \ac{MNO} and multiple \glspl{CP}. We formulate the cache incentive problem as a Stackleberg game in which the \ac{MNO} is a leader and the \glspl{CP} are the followers. The \ac{MNO} predicts the requests profile of the \glspl{CP} and define the price that maximizes its revenue. On the other hand, due to limited storage capacity of the \glspl{SBS}, \glspl{CP} cannot cache all their files, thus, the \glspl{CP} compete to maximize the amount of storage space they request given the fixed price by the \ac{MNO}. The competition between the \glspl{CP} is formulated as a non-cooperative sub-game in which each \ac{CP} aims to maximize the satisfaction of its users in terms of transmission rate. Both the existence and uniqueness of the \ac{NE} are proved. The \ac{NE} represents a state in which none of the \glspl{CP} can improve its benefit by requesting a different amount of storage space given the requested amount by the other \glspl{CP} fixed. We provide closed-form expressions of the storage amount the \glspl{CP} must request at the \ac{NE}. Then, given the request profiles of the \glspl{CP}, an optimization problem is formulated at the \ac{MNO} to determine the optimal price that should be charged to the \glspl{CP} to maximize the \ac{MNO}'s revenue. Simulation results show that the \ac{MNO} can get a revenue that is up to $50$\% higher compared to the case in which the prices are selected arbitrarily. Moreover, at the equilibrium, we show that the \glspl{CP} can achieve a utility that is $20$\% to $50$\% higher compared to the two cases in which half and the double of the storage space at the \ac{NE} are requested by the \glspl{CP}, respectively.

The rest of the paper is organized as follows. In Section \ref{sec:pre}, we define the system and the caching models. The Stackelberg game is formulated in Section \ref{sec:sgame}, and the analysis is conducted in Section \ref{sec:solution}. The numerical results are presented in Section \ref{sec:numerical}. We finally draw conclusions in Section \ref{sec:conclusions}.

\vspace{-0.1cm}
\section{Preliminaries}\label{sec:pre}
%
\vspace{-0.1cm}
\subsection{System  Model}
We consider an \ac{HetNet} which consists of $N$ cache-enabled \glspl{SBS} and is controlled by one \ac{MNO}. There are $M$ \glspl{CP} that are willing to cache their files in the \ac{HetNet} in order to enhance the \ac{QoS} of their users. Let $\mathcal{N} = \{1, 2, \dots, N\}$ denote the set of cache-enabled \glspl{SBS}, each of which has a limited capacity $s_n$, and $\mathcal{M} = \{1, 2, \dots, M\}$ denote the set of \glspl{CP}. Each \ac{CP} requests from the \ac{MNO} to cache its $q_m$ most popular files. The $N$ \glspl{SBS} are connected to the \ac{MNO} with a backhaul such as DSL, optical fibers or wireless backhaul. All the \glspl{UE} communicate with the corresponding \glspl{SBS}. If the desired files from the \glspl{UE} are not cached in the \glspl{SBS}, they are served by the \ac{MNO} via the capacity-limited backhaul links. The system model is depicted in Fig. \ref{fig:scenario}.

\begin{figure}[t]	
	\centering
	\includegraphics[width=0.9\linewidth]{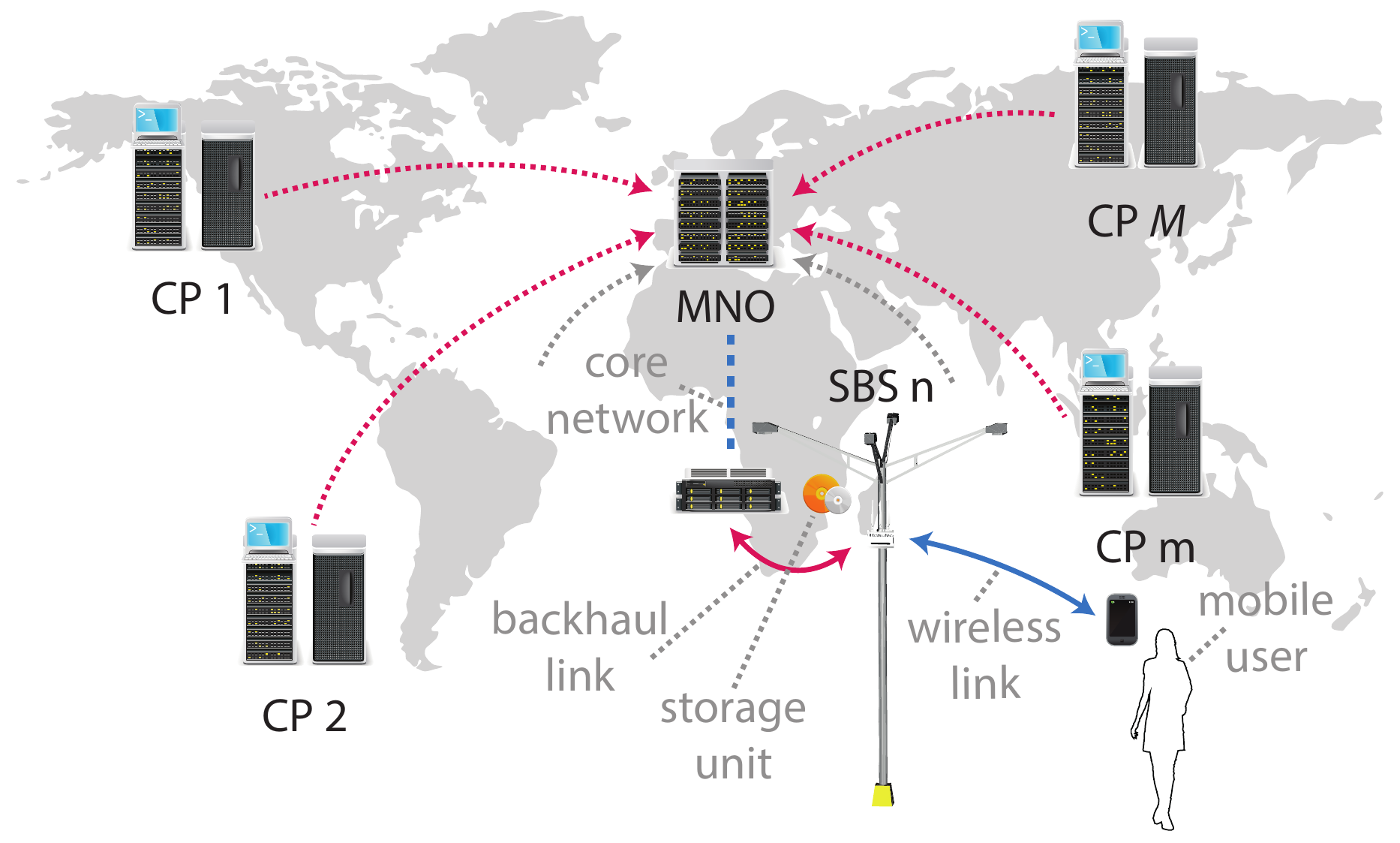}
	\caption{\small An illustration of the system model.
}
\vspace{-0.6cm}	\label{fig:scenario}
\end{figure}

Each \ac{CP} $m$ has a local content catalog $\mathcal{F}_m$ with $F_m$ files. The global files catalog is denoted $\mathcal{F} = \bigcup \mathcal{F}_m$, and all the files are assumed to have the same unit size. 
We assume that each file $f_i$ is cached by the MNO $p_{f_i}$ times at the SBSs based on the popularity of the file.  
The goal of each \ac{CP} is to cache its most popular files locally at the \glspl{SBS} so that the users experience an improved data transmission rate. However, the storage capacity of the \glspl{SBS} is limited and not all the files can be cached with sufficient copies. The notations used in the rest of this paper is summarized in Table \ref{table:notation}.

%
\vspace{-0.1cm}
\subsection{Cache Incentive Model}
We formulate the cache incentive problem as a Stackelberg game in which a non-cooperative game is involved as a sub-game. As shown in Fig. \ref{fig:scenario}, the Stackelberg game is played between the \ac{MNO} and the $M$ \glspl{CP}. On the one hand, all the \glspl{CP} wish to cache as many as possible files at the MNO's \ac{HetNet} such that the \ac{QoS} of their users (for example the delay) is improved. On the other hand, there exists a cost for caching. Therefore, the incentive proactive caching mechanism is controlled by a charge price determined by the MNO to optimize its revenue in the proposed Stackelberg game. The sequential game (Stackelberg) leads to a more competitive equilibrium than the simultaneous move game: The \ac{MNO} plays as the leader providing the caching price $\pi$ to all the \glspl{CP}, while the \glspl{CP} play as the followers reacting with their optimal number of files they want to cache based on the given price. Both the \ac{MNO} and the \glspl{CP} are rational and optimize their own utilities denoted as $u_o$ and $u_m$, respectively. The strategy of the leader \ac{MNO} is the caching price $\pi$ and the strategies of the followers \glspl{CP} are their the number of files they want to cache $q_m(\pi)$. It should be noted that the price defined by the MNO depends on the used caching policy and the storage capacity of all the SBSs.

Due to the limited caching capacity, the $M$ \glspl{CP} compete for the possible caching quantities. Therefore, a non-cooperative game is played as a sub-game among the $M$ \glspl{CP}. We assume that the file access probability of each file is perfectly known at all the entities (i.e., \glspl{CP} and \ac{MNO}). The strategies of the \glspl{CP} are the quantity of their caching requests. Under the perfect knowledge of the access probability of each caching file, the \glspl{CP} will choose to cache the first $q_m^*$ files ordered with the highest access probability, i.e., popularity.

\bgroup
\def\arraystretch{1.7}%
\begin{table}
\scriptsize
\caption{List of Notations. \label{table:notation}}	
\begin{tabular}{ | l | l | p{2cm}|}
\hline
$\mathcal{N} = \{1, 2, \dots, N\}$ & set of \glspl{SBS} \\
\hline
$\mathcal{M} = \{1, 2, \dots, M\}$ & set of \glspl{CP}\\
\hline
$q_m$ & The number of files that \ac{CP} $m$ wants to cache  \\
\hline
$p_{f_i}$ & The number of cached copies of file $f_i$ \\
\hline
$s_n$ & The cache size of each SBS $n$ \\
\hline
$\pi$ & The charged price for caching a given file \\
\hline
$u_o$ & The utility of \ac{MNO} \\
\hline
$u_m$ & The utility of \glspl{CP}\\
\hline
$\mathcal{Q}_m$ & The set of files requested by each \ac{CP} $m$\\
\hline
$p_m$ & The mean access probability of \ac{CP} $m$'s files\\
\hline
$d(\mathcal{Q}_m)$ & The total caching copies of requested files of all \glspl{CP}\\
\hline
\end{tabular}
\vspace{-0.4cm}
\end{table}
\bgroup
\def\arraystretch{1.0}%
%
\section{Stackelberg Game Formulation}\label{sec:sgame}
In this section, we provide the utilities of the \ac{MNO} and the \glspl{CP}, respectively. The proposed Stackelberg game is played as follows. The leader plays first by providing its optimal strategy to all the followers after predicting the strategies of the followers. Then, the followers reply with their best strategies given the strategy of the leader. The result of the Stackelberg game is that both the leader and the followers optimize their utilities. The Stackelberg equilibrium is exactly the point that the leader wishes. The basic idea of the utilities for both the \ac{MNO} and the \ac{CP} can be formulated as a general function, such as $Utility = Revenue - Cost$. In the following, we will analyse the utilities of the \ac{MNO} and the \glspl{CP}, respectively.
\subsection{Utility of the \ac{MNO}}\label{sec:sgame:mo}
For the \ac{MNO}, the main cost is the caching storage cost denoted as $C_o$. This caching storage cost is a function of the caching capacity of all the SBSs and the number of times each file is cached. Therefore, the cost of caching all CPs files $C_o$ for the MNO can be defined as the following barrier function:
\begin{eqnarray}\label{c1}
{\small
C_o = \left\{ \begin{array}{ll}
\frac{1}{S - d(\mathcal{Q_M})} & \textrm{if $0 < d(\mathcal{Q_M}) \le S$}\\
\infty & \textrm{otherwise,}
\end{array} \right.
}
\end{eqnarray}
where $d(\mathcal{Q_M})$ is the amount of all the cached files by the MNO and is given by:
\begin{eqnarray}\label{aq}
d(\mathcal{Q_M}) = \sum_{m=1}^M \sum_{f_i \in \mathcal{Q}_m} p_{f_i},
\end{eqnarray}
where $p_{f_i}$ is the number of copies of file $f_i$ that are cached at the SBSs and defined by the \ac{MNO} depending on the popularity of file $f_i$.
The caching capacity of the MNO is given by:
\begin{eqnarray}
S = \sum_{n=1}^N s_n.
\end{eqnarray}

The revenue of the \ac{MNO} in the caching problem is the total charge from all the \glspl{CP} for caching their files. The price of caching one file is denoted by $\pi$ and the number of files that is requested by a \ac{CP} $m$ to be cached is $q_m$. Thus, the total revenue of the \ac{MNO} can be given by
\begin{eqnarray}\label{r1}
R_o = \pi \sum_{m=1}^M q_m(\pi).
\end{eqnarray}

Now we obtain the utility function of the \ac{MNO} in our proposed Stackelberg game as a function of the quantity of caching request from all the \glspl{CP}, $\mathbf{q} = [q_1, \dots, q_M]$.
\begin{eqnarray}
u_o(\mathbf{q}(\pi)) = R_o(\mathbf{q}(\pi)) - C_o(\mathbf{q}(\pi)) ,
\end{eqnarray}
where $R_o$ and $C_o$ are defined in (\ref{c1}) and (\ref{r1}), respectively.

%
\subsection{Utility of the CPs}
\label{sec:sgame:cp}
Given that the \ac{MNO} fixes a price $\pi$ for a storage unit, all the \glspl{CP} reply with their quantity of caching request by optimizing their own utilities. We assume that each \ac{CP} is independent of the others and makes its best strategy only based on its local information.

The cost of \glspl{CP} for the caching requests are the charge paid to the \ac{MNO} for their desired quantity of caching files. For each \ac{CP} $m$, the cost $C_m$ is defined as
\begin{eqnarray} \label{c4}
C_m = \pi q_m(\pi),
\end{eqnarray}
where $q_m(\pi)$ is a function of the provided charge price $\pi$. $C_m$ is an increasing function of the caching quantity $q_m(\pi)$.

In fact, the \glspl{CP} such as Facebook and Youtube are concerned about the satisfaction of their users. This user satisfaction, which can be the delay of downloading a file or the data throughput, is an increasing function of the caching request quantity $q_m$ of \ac{CP} $m$ and a decreasing function of the caching request quantities $\mathbf{q}_{-m}$ of others because of the limited caching capacity. $\mathbf{q}_{-m}$ is defined as the caching request quantities of all the other \glspl{CP} except \ac{CP} $m$, i.e.,
\begin{eqnarray}
\mathbf{q}_{-m} = [q_{1}, \dots, q_{m-1}, q_{m+1}, \dots, q_{M}].
\end{eqnarray}

The revenue of \glspl{CP} can be easily defined as the satisfaction function of each \ac{CP}. We assume that the $q_m$ files with the highest access probabilities in each \ac{CP} $m$  are requested. Then the satisfaction function of \ac{CP} $m$ is as follows, which is an increasing function of the caching request quantity.
\begin{eqnarray}\label{r2}
R_m = \log \left(1+ \frac{q_m}{1+ \frac{1}{\alpha_m} J_m} \right),
\end{eqnarray}
where $J_m = \sum_{l\ne m}{q}_{l}$ is the quantity of caching files requested by all the other \glspl{CP} except \ac{CP} $m$ and $\alpha_m$ is the amount of generated requests by the users of CP $m$. The satisfaction $R_m$ of a CP $m$ was chosen as it is an increasing function of the number of files that are cached by a CP $m$ as well as its traffic load $\alpha_m$. Indeed, the more files are cached at the SBSs and the higher the popularity of the files, the higher is the number of requests that can be served locally from the SBSs. Thus, more users of CP $m$ can experience a higher transmission rate. On the other hand, the satisfaction of a CP $m$ decreases as the number of files cached by the other CPs increases. In this case, less storage is available for CP $m$ and a higher number of its requests need to be served through the backhaul. It should be noted that any other function that has the same properties as $R_m$ can be chosen as a satisfaction function.

Given the constraints on the caching capacity of the mobile network, now we obtain the utility function of each \ac{CP} $m$ as
\begin{eqnarray}\label{eq:1}
u_m^{} = R_m - C_m,
\end{eqnarray}
where the cost $C_m$ and the revenue $R_m$ are defined in (\ref{c4}) and (\ref{r2}), respectively. It is obvious that $u_m^{}$ is a concave function of the quantity of the caching requests.

Given the charge price from the \ac{MNO}, all the \glspl{CP} respond with their quantities of caching requests simultaneously by maximizing their own utilities. Due to the limited caching capacity $S$, all the \glspl{CP} can be considered as competitive players competing for the caching quantities. Therefore, we can formulate the quantity optimization problem of \glspl{CP} as a non-cooperative game $\mathcal{G} = \mathcal{G}(\mathcal{M}, \mathcal{Q}, \mathcal{U})$ consisting of the following components:
\begin{itemize}
\item The set of players in $\mathcal{G}$ is the set $\mathcal{M}$ of \glspl{CP}.
\item Given the quantity profile $\mathbf{q}_{-m} \in \mathcal{Q}_m = \prod_{l\ne m} \mathcal{Q}_l$ of the opponents of player $m$, the feasible action set of \ac{CP} in the presence of the caching capacity $d(\mathcal{Q}_M)\le S$ is
\begin{eqnarray}
\mathcal{Q}_M(\mathbf{q}_{-m}) = \{\mathbf{q}_m \in \mathcal{Q}_m : \mathbf{q}_m\ge 0\}.
\end{eqnarray}
\item The utility $u_m^{}$ of player $m$ is given by (\ref{eq:1}).
\end{itemize}

In this context, the most adopted solution concept is the \ac{NE}.
\begin{definition}
A quantity profile $\mathbf{q}^*$ is a \ac{NE} of the caching request quantity game $\mathcal{G}$ if
\begin{eqnarray}
\mathbf{q}^*_m \in \mathcal{Q}_m (\mathbf{q}^*_{-m}), & \forall m \in \mathcal{M}, \\
u_m(\mathbf{q}^*) \ge u_m(\mathbf{q}_m, \mathbf{q}^*_{-m}), & \forall \mathbf{q}_m \in \mathcal{Q}_m(\mathbf{q}^*_{-m}).
\end{eqnarray}
\end{definition}
Provided the utility functions for the \ac{MNO} and \glspl{CP}, the optimal strategies of the leader and the followers are derived in the next section. In the Stackelberg game, the followers' strategies are predicted before the leader makes its own strategy. Therefore, the optimal quantity of caching requests of the \glspl{CP} is analyzed first.
%
\section{Stackelberg Game Solution}
\label{sec:solution}
In this section, we derive the solution of the proposed Stackelberg game. In the sequential (Stackelberg) game, the leader moves first by predicting the strategies of the followers. The followers reply by optimizing their own utilities given the strategy of the leader. The result of the game is that the followers play exactly what the leader wishes. Therefore, the optimal strategy of the followers should be predicted first. In the following, we obtain the optimal quantities of caching requests of the \glspl{CP} in the closed form.
%
\subsection{Optimal Quantity of Caching Request}
\label{sec:solution:oq}
The \glspl{CP} optimize their strategies of the quantity of the caching requests and provide them to the \ac{MNO} given the charge price $\pi$ announced by the \ac{MNO}. The optimal $q_m$ is obtained by solving the following problem.
\begin{equation}
q_m = \arg \max_{\mathbf{q}} u_m \quad \text{subject to} \quad \mathbf{q} \ge 0.
\end{equation}
Given the charge price for the caching files, the \glspl{CP} compete for the quantity of caching requests, which formulates a non-cooperative game. In the following, we find the solution for this non-cooperative game.
\begin{proposition}[Best Response]
Given the charge price $\pi$ announced by the \ac{MNO}, the \ac{BR} of each \ac{CP} is the quantity of caching files it requests, which is
\begin{eqnarray}\label{qm}
q_m^{BR} = \left(\frac{1}{\pi}-1- \frac{J_{m}}{\alpha_m}\right)^+,
\end{eqnarray}
with $\alpha_m \ge M$.
\end{proposition}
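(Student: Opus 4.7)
The plan is to derive the best response by solving a one-dimensional concave maximization, since for fixed $\mathbf{q}_{-m}$ the utility $u_m(q_m,\mathbf{q}_{-m}) = \log\bigl(1 + q_m/A_m\bigr) - \pi q_m$, with $A_m \triangleq 1 + J_m/\alpha_m > 0$, is strictly concave in $q_m \in [0,\infty)$. Indeed, the revenue term $R_m$ is a strictly concave function of $q_m$ (composition of $\log$ with an increasing affine map of positive slope $1/A_m$), while the cost term $\pi q_m$ is linear, so concavity of $u_m$ is immediate. Combined with the closed convex feasible set $\{q_m \ge 0\}$, this guarantees that a unique maximizer exists, so the best response is well-defined.

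Next, I would apply the KKT conditions to the constrained program $\max_{q_m \ge 0} u_m(q_m,\mathbf{q}_{-m})$. Computing the first-order derivative gives
\begin{equation*}
\frac{\partial u_m}{\partial q_m} \;=\; \frac{1}{A_m + q_m} - \pi.
\end{equation*}
Setting this to zero yields the unconstrained stationary point $\tilde q_m = 1/\pi - A_m = 1/\pi - 1 - J_m/\alpha_m$. If $\tilde q_m \ge 0$ this is the interior optimum; otherwise the derivative is non-positive throughout $[0,\infty)$, and the optimum is attained at the boundary $q_m = 0$. Combining both cases gives the announced formula $q_m^{BR} = \bigl(1/\pi - 1 - J_m/\alpha_m\bigr)^+$.

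Finally, I would comment on the role of the assumption $\alpha_m \ge M$. This hypothesis plays no part in deriving the best-response formula itself; rather, it is the natural structural condition that will later guarantee the uniqueness of the fixed point of the joint best-response map. Specifically, from the formula, $\partial q_m^{BR}/\partial q_l = -1/\alpha_m$ for $l \neq m$ (in the interior region), so the sum of off-diagonal magnitudes in the Jacobian is $(M-1)/\alpha_m < 1$ under $\alpha_m \ge M$, which is the diagonal-dominance / contraction condition one would invoke when establishing the Nash equilibrium.

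The main obstacle is essentially bookkeeping rather than mathematics: one must be careful with the non-negativity projection (captured by the $(\cdot)^+$ operator) and must verify that the strict concavity argument applies throughout the open region $q_m > -A_m$, which contains $[0,\infty)$ since $A_m > 0$. Beyond that, the derivation is a routine application of first-order optimality for a scalar strictly concave program.
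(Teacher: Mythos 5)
Your proof is correct and follows essentially the same route as the paper: compute the first-order condition $\frac{1}{A_m+q_m}-\pi=0$ (which is the paper's $\frac{\alpha_m}{\alpha_m+\alpha_m q_m+J_m}-\pi=0$ in disguise), confirm concavity via the negative second derivative, and project onto $q_m\ge 0$ with the $(\cdot)^+$ operator. Your treatment is in fact slightly more careful than the paper's — you handle the boundary case explicitly rather than just appending the positive part, and you correctly observe that the hypothesis $\alpha_m\ge M$ is not needed for the best-response formula itself but only for the later uniqueness argument.
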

\begin{proof}
The \ac{BR} of the caching request quantity $q_m^{BR}$ is obtained by checking the first derivative of $u_m$ with respect to $q_m$,
\begin{eqnarray}
\frac{\partial u_m}{\partial q_m} = \frac{\alpha_m}{\alpha_m+\alpha_m q_m + J_{m}} - \pi = 0.
\end{eqnarray}
The second derivative of $u_m$ with respect to $q_m$ is
\begin{eqnarray}
\frac{\partial^2 u_m}{\partial q_m^2} = \frac{-\alpha_m^2}{(\alpha_m+\alpha_m q_m + J_{m})^2} < 0,
\end{eqnarray}
which guarantees a global optimal of $u_m$.
Since the utility of each \ac{CP} $u_m$ is a convex function of $q_m$, (\ref{qm}) is proved. The function $()^+$ is to ensure the requested caching quantity to be a non-negative value.
\end{proof}
\begin{remark}
The total quantity of the caching requests of all the other \glspl{CP} can be learnt or fed back at each \ac{CP}. The $q_m^{BR}$ is based on the local information $\alpha_m$, $q_m$ and the feedback $J_m$  as a single value of the summation.
\end{remark}
Now we prove that the proposed non-cooperative game admits a unique \ac{NE}.
\begin{theorem}[Nash Equilibrium]
Given the charge price $\pi$ announced by the \ac{MNO}, the \ac{NE} of each \ac{CP} is the quantity of caching files it requests, which is
\begin{eqnarray}\label{qmNE}
q_m^{NE} = \frac{D_m}{D} = \frac{(\frac{1}{\pi} - 1) a_m b_m}{D},
\end{eqnarray}
where $a_m$ is
\begin{eqnarray}\label{bm}
{\small
a_m = \left\{ \begin{array}{ll}
1- \frac{1}{\alpha_m} - \frac{1- \alpha_1}{\alpha_1} \sum_{l=2, l\ne m}^M \frac{\alpha_m - \alpha_l}{(\alpha_l - 1)\alpha_m}  & \textrm{if $m \ne 1$}\\
1- \frac{1}{\alpha_1} - \frac{1- \alpha_M}{\alpha_M} \sum_{l=2}^{M-1} \frac{\alpha_1 - \alpha_l}{(\alpha_l - 1)\alpha_1}  & \textrm{if $m = 1$}
\end{array} \right.
}
\end{eqnarray}
and $b_m$ is
\begin{eqnarray}\label{am}
b_m = \left\{ \begin{array}{ll}
\prod_{l=2, l\ne m}^M (1- \frac{1}{\alpha_l}) & \textrm{if $m \ne 1$}\\
\prod_{l=2}^{M-1} (1- \frac{1}{\alpha_l}) & \textrm{if $m = 1$}
\end{array} \right.
\end{eqnarray}
and
\begin{eqnarray}\label{D}
D = \left( 1 - \frac{1- \alpha_1}{\alpha_1}\sum_{l=2}^M \frac{1}{\alpha_l -1} \right)\prod_{l=2}^{M} \Big(1- \frac{1}{\alpha_l}\Big).
\end{eqnarray}
If $M =2$, then $\sum_{l =2}^{M-1} \frac{\alpha_1 - \alpha_l}{(\alpha_l -1) \alpha_1} = 0$ and $\prod_{l =0}^{M-1} 1- \frac{1}{\alpha_l} = 1$.
\end{theorem}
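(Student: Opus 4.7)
The plan is to leverage the best-response characterization already established in the preceding proposition. Recall that dropping the non-negativity projection, the interior best response of player $m$ is $q_m^{BR}(\mathbf{q}_{-m}) = \tfrac{1}{\pi} - 1 - \tfrac{1}{\alpha_m}\sum_{l\neq m}q_l$. At a NE, every component of $\mathbf{q}^{NE}$ must simultaneously satisfy its own best-response equation, so the NE is the solution of the linear system
\begin{equation}
\alpha_m q_m^{NE} + \sum_{l\neq m} q_l^{NE} = \alpha_m\Bigl(\tfrac{1}{\pi}-1\Bigr),\qquad m=1,\ldots,M,
\end{equation}
which in matrix form reads $A\,\mathbf{q}^{NE} = (\tfrac{1}{\pi}-1)\,\boldsymbol{\alpha}$, where $A = \mathrm{diag}(\alpha_m - 1) + \mathbf{1}\mathbf{1}^{\!\top}$ and $\boldsymbol{\alpha}=(\alpha_1,\ldots,\alpha_M)^{\!\top}$.

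Before solving, I would first justify that the game actually has a unique NE. Existence follows from the Debreu--Glicksberg--Fan theorem: each strategy set can be restricted to the compact interval $[0,\tfrac{1}{\pi}-1]$ (the best response never exceeds this bound), $u_m$ is continuous in the joint profile, and the second-derivative computation in the proof of the proposition shows $u_m$ is strictly concave in $q_m$. Uniqueness reduces to the non-singularity of $A$: by the matrix-determinant lemma,
\begin{equation}
\det A = \Bigl(\prod_{l=1}^{M}(\alpha_l-1)\Bigr)\Bigl(1 + \sum_{l=1}^{M}\tfrac{1}{\alpha_l-1}\Bigr),
\end{equation}
which is strictly positive whenever $\alpha_m > 1$, a fortiori under the standing assumption $\alpha_m \geq M$.

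To extract the closed form, I would introduce the aggregate $Q = \sum_{l=1}^{M} q_l^{NE}$. Rewriting the system as $(\alpha_m-1)q_m^{NE} = \alpha_m(\tfrac{1}{\pi}-1) - Q$ and summing over $m$ gives one scalar equation for $Q$; substituting $Q$ back immediately yields a symmetric formula of the form $q_m^{NE} = \bigl(\alpha_m(\tfrac{1}{\pi}-1) - Q\bigr)/(\alpha_m-1)$. The announced form $q_m^{NE} = (\tfrac{1}{\pi}-1)\,a_m b_m / D$ is then obtained by the same calculation carried out via Cramer's rule applied to $A$, where the index $1$ receives a distinguished role because one row is used as the pivot that eliminates $Q$ from the remaining rows; this accounts for the separate cases $m=1$ versus $m\neq 1$ in the definitions of $a_m$ and $b_m$ and for the degenerate product/sum conventions stated in the final sentence of the theorem.

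The main obstacle is not conceptual but bookkeeping: I expect the bulk of the proof to be the algebraic verification that the symmetric solution $q_m^{NE}=\bigl(\alpha_m(\tfrac{1}{\pi}-1)-Q\bigr)/(\alpha_m-1)$ agrees, after factoring $\prod_{l=2}^{M}(\alpha_l-1)$ into numerator and denominator, with the asymmetric $a_m, b_m, D$ expressions displayed in the theorem. Once this identification is made, a final check that $q_m^{NE}\geq 0$ (so that the projection $(\cdot)^+$ in the best response is inactive and the NE is in fact interior) follows from $\alpha_m \geq M$ by bounding $Q\leq M(\tfrac{1}{\pi}-1)$ against $\alpha_m(\tfrac{1}{\pi}-1)$, closing the argument.
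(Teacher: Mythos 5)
Your proposal is correct, and its computational core coincides with the paper's (very terse) proof: both impose all $M$ interior best responses simultaneously and solve the resulting linear system, the paper writing it as $\mathbf{D}\mathbf{q}=\mathbf{C}$ with unit diagonal and off-diagonal entries $1/\alpha_m$, you as the rescaled $A=\mathrm{diag}(\alpha_m-1)+\mathbf{1}\mathbf{1}^{\top}$. The genuine differences are in how the system is solved and in what is actually established. The paper invokes Cramer's rule and displays the determinant ratios $D_m/D$ without further justification; you instead exploit the rank-one structure via the aggregate $Q=\sum_l q_l$, obtaining the cleaner symmetric form $q_m^{NE}=\bigl(\alpha_m(\tfrac{1}{\pi}-1)-Q\bigr)/(\alpha_m-1)$ together with $\det A=\prod_l(\alpha_l-1)\bigl(1+\sum_l\tfrac{1}{\alpha_l-1}\bigr)>0$ from the matrix-determinant lemma (I checked that your formula reproduces the theorem's expression in the $M=2$ case). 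More importantly, you supply three ingredients the paper's proof omits even though the theorem is advertised as proving them: existence (fixed-point argument on the compact box $[0,\tfrac{1}{\pi}-1]^M$ with strict concavity in own action), uniqueness (non-singularity of $A$), and the verification that the projection $(\cdot)^{+}$ is inactive at equilibrium --- the last point being exactly where the standing assumption $\alpha_m\ge M$ earns its keep, which the paper never explains. One ordering caveat: non-singularity of $A$ only gives uniqueness of the solution of the \emph{unconstrained} linear system, so you should run the interiority argument first (if $q_m=0$ with an active projection then $J_m>\alpha_m(\tfrac{1}{\pi}-1)\ge M(\tfrac{1}{\pi}-1)\ge Q\ge J_m$, a contradiction), after which every NE of the constrained game is known to solve the linear system and uniqueness follows. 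With that reordering, and the admitted bookkeeping needed to match your symmetric formula to the paper's $a_m$, $b_m$, $D$, your argument is complete and strictly more rigorous than the paper's sketch.
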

\begin{proof}
We give only the sketch of the proof due to the space limitations. The trick for finding the \ac{NE} of the proposed non-cooperative game is to jointly solve the $M$ functions of the \ac{BR} for all the \glspl{CP}. The $M$ functions of the \ac{BR} can be formulated as a matrix function denoted as $\mathbf{D} \mathbf{q} = \mathbf{C}$ where the matrix $\mathbf{D}$ and vectors $\mathbf{q}$ and $\mathbf{C}$ are as follows, respectively.
\begin{eqnarray}
\left[ \begin{array}{cccc}
1 & \frac{1}{\alpha_1} & \ldots & \frac{1}{\alpha_1} \\
\frac{1}{\alpha_2}  &1 & \ldots & \frac{1}{\alpha_2}\\
\vdots & \vdots & \ddots & \vdots\\
\frac{1}{\alpha_M} & \frac{1}{\alpha_M}& \ldots&1
\end{array} \right]
\left[ \begin{array}{c}
q_1\\
q_2\\
\vdots\\
q_M
\end{array} \right]
 =
\left[ \begin{array}{c}
\frac{1}{\pi}-1\\
\frac{1}{\pi}-1\\
\vdots\\
\frac{1}{\pi}-1
\end{array} \right].
\end{eqnarray}
The \ac{NE} quantity of each \ac{CP} is solved by applying the Cramer's rule $q_m = \frac{\text{det}(\mathbf{D}_m)}{\text{det}(\mathbf{D})}$, where $\text{det}(\mathbf{D})$ is the determinant of matrix $\mathbf{D}$ and $\text{det}(\mathbf{D}_m)$ is the determinant of matrix $\mathbf{D}_m$ which is formed by replacing the $m$-th column of $\mathbf{D}$ by the column vector $\mathbf{C}$.
\end{proof}
\begin{remark}
In caching, there are two basic modes. One is coded caching in which the SBSs cache any number of bits from the files. In this case, the \ac{NE} quantity derived in (\ref{qmNE}) is the actual quantity of files requested by \ac{CP} $m$. The other one is uncoded caching, in which a given file can only be cached as a whole or equally divided chunks. In this case, the round function $\lfloor \rceil$ is applied to $q_m^{NE}$ before \ac{CP} $m$ replies to the \ac{MNO} in order to obtain the integer optimal quantity of caching request. In our model, we assume the uncoded caching case so that only integer quantity of caching requests are considered.
\end{remark}
%
\subsection{Optimal Charge Price}
\label{sec:solution:ocp}
After predicting the strategy policies of the \glspl{CP}, the \ac{MNO} optimizes the charge price $\pi$ by solving the following problem.
\begin{equation}
\pi^* = \arg \max_{\pi} u_o(\mathbf{q}(\pi)) \quad \text{subject to} \quad {\pi} \ge 0.
\end{equation}
We observe from (\ref{c1}) and (\ref{aq}) that the quantity of caching request from each \ac{CP} $m$ exists in the size of the set $\mathcal{Q}_m$. Therefore, we assume that the number of copies of each file $f_i \in \mathcal{Q}_m$ that are cached are defined as the quantized vector given by:
\begin{eqnarray}\label{p}
\mathbf{p}_{f_i} = [\dots, p_m - \Delta p_m, p_m, p_m + \Delta p_m, \dots]
\end{eqnarray}
with size of $q_m$ for each \ac{CP} $m$. $p_m$ is the mean access probability of files requested by \ac{CP} $m$ and $\Delta p_m$ is the step size.

Given the number of times $\mathbf{p}_{f_i}$ each file $f_i$ is cached as defined in (\ref{p}), we obtain
\begin{eqnarray}\label{pp}
\sum_{f_i \in \mathcal{Q}_m}{p}_{f_i} = q_m^{NE} f(p_m),
\end{eqnarray}
where $f(p_m)$ is a function of the mean access probability $p_m$ with the following form
\begin{eqnarray}\label{pm}
{\small
f(p_m) = \left\{ \begin{array}{ll}
p_m & \textrm{when $q_m$ is odd,}\\
p_m + \frac{\Delta p_m}{2} & \textrm{when $q_m$ is even.}
\end{array} \right.
}
\end{eqnarray}
Since the \ac{MNO} can predict the \ac{NE} quantities of all the \glspl{CP}, which are functions of the charge price $\pi$, the utility function of the \ac{MNO} is then
\begin{eqnarray}
u_o = \pi \sum_{m =1}^M q_m(\pi) - \frac{1}{S- \sum_{m =1}^M q_m(\pi) f(p_m)}.
\end{eqnarray}
By predicting the requesting $q_m(\pi)$ from each \ac{CP} $m$, the \ac{MNO} can make its own optimal strategy, which is the charge price $\pi^*$. Here we analyse the case where the total caching requests do not exceed the caching capacity of the \ac{HetNet}.
\begin{proposition}[Optimal Price]
The optimal charge price $\pi^*$ provided by the \ac{MNO} to maximize its own utility $u_o$ is as follows.
\begin{eqnarray}\label{optiprice}
{\small
\pi^* = \frac{\sqrt{\frac{r}{t}} + r}{S+r},
}
\end{eqnarray}
where $r$ is a function of $\mathbf{\alpha}$ and $t$ is a function of both $\mathbf{\alpha}$ and $\mathbf{p}$ defined as
\begin{eqnarray}
t = \sum_{m=1}^M \frac{a_m b_m}{D},
\end{eqnarray}
and
\begin{eqnarray}
r = \sum_{m=1}^M \frac{a_m b_m f(p_m)}{D}.
\end{eqnarray}
\end{proposition}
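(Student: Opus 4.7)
The plan is to carry out the standard Stackelberg backward-induction step: substitute the closed-form NE quantities $q_m^{NE}(\pi)$ of Theorem~1 into the leader's utility $u_o$, reduce it to a single-variable function of $\pi$, and then solve the first-order condition. First I would use the closed-form expression $q_m^{NE}=(\tfrac{1}{\pi}-1)\,\tfrac{a_m b_m}{D}$ to rewrite the two aggregate quantities that appear in $u_o$: the total demand
\begin{equation}
\sum_{m=1}^{M} q_m(\pi) \;=\; \Bigl(\tfrac{1}{\pi}-1\Bigr)\sum_{m=1}^{M}\tfrac{a_m b_m}{D} \;=\; t\Bigl(\tfrac{1}{\pi}-1\Bigr),
\end{equation}
and, using the quantization relation in \eqref{pp}, the total number of cached copies
\begin{equation}
\sum_{m=1}^{M} q_m(\pi)\,f(p_m) \;=\; r\Bigl(\tfrac{1}{\pi}-1\Bigr).
\end{equation}
Plugging these into the MNO's utility yields the single-variable expression
\begin{equation}
u_o(\pi) \;=\; t(1-\pi) \;-\; \frac{1}{S+r-r/\pi},
\end{equation}
valid on the regime stated in the proposition, where the total caching request stays within the capacity (so that the barrier term is finite and positive in the denominator).

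Next I would differentiate with respect to $\pi$. A direct computation gives
\begin{equation}
\frac{\partial u_o}{\partial \pi} \;=\; -t \;+\; \frac{r/\pi^{2}}{(S+r-r/\pi)^{2}},
\end{equation}
so the first-order condition $\partial u_o/\partial \pi=0$ becomes $\sqrt{t}\,(S+r-r/\pi) = \sqrt{r}/\pi$, after taking the positive square root (the negative branch would make the denominator of the cost term vanish or change sign, violating the capacity regime). Rearranging
\begin{equation}
\sqrt{t}\,\pi(S+r)\;=\;\sqrt{r}+\sqrt{t}\,r
\end{equation}
and dividing by $\sqrt{t}(S+r)$ produces exactly \eqref{optiprice}.

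To certify that this stationary point is the maximizer, I would check the second-order condition: since $S+r-r/\pi$ is increasing in $\pi$ while $r/\pi^{2}$ is decreasing, the positive term $(r/\pi^{2})/(S+r-r/\pi)^{2}$ in $\partial u_o/\partial \pi$ is strictly decreasing in $\pi$, hence $\partial u_o/\partial\pi$ is strictly decreasing, $u_o$ is strictly concave on the admissible interval, and the unique root $\pi^{*}$ given by \eqref{optiprice} is the global maximum. I expect the main obstacle to be a bookkeeping one rather than a conceptual one: keeping the two $\pi$-dependent aggregates cleanly separated so that the square-root extraction is unambiguous, and justifying the sign choice by appealing to the barrier-cost regime $0<d(\mathcal{Q}_M)\le S$ rather than by additional algebra. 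The non-negativity constraint $\pi\ge 0$ in the optimization is automatically satisfied because $r,t,S>0$ make the right-hand side of \eqref{optiprice} positive.
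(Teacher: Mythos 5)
Your proposal is correct and follows essentially the same route as the paper: substitute the NE quantities to reduce $u_o$ to the single-variable form $t(1-\pi)-1/(S+r-r/\pi)$, solve the first-order condition, and confirm concavity on the capacity-feasible interval (the paper does this by computing the second derivative explicitly, you by a monotonicity argument on the first derivative, which is equivalent). If anything, your justification for discarding the root $(r-\sqrt{r/t})/(S+r)$ via the barrier/capacity regime is slightly more robust than the paper's appeal to positivity of the price alone.
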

\begin{proof}
Since $r$ and $t$ are independent of the optimization objective $\pi$, we rewrite the utility function of \ac{MNO} $u_o$ as
\begin{eqnarray}\label{uo}
u_o = (1-\pi) t - \frac{1}{S- (\frac{1}{\pi} - 1) r}.
\end{eqnarray}
Then the first derivative with respect to $\pi$ is
\begin{eqnarray}\label{price1}
\frac{\partial u_o}{\partial \pi} = \frac{r}{(S\pi - (1-\pi)r)^2} - t.
\end{eqnarray}
The second derivative with respect to $\pi$ is
\begin{eqnarray}
\frac{\partial^2 u_o}{\partial \pi^2} = \frac{-2r(S\pi - (1-\pi)r)(S+r)}{(S\pi - (1-\pi)r)^4}.
\end{eqnarray}
By observing that the \ac{MNO} will definitely provide a price in order to ensure $d(\mathcal{Q}_m) \le S$, $S\pi - (1-\pi)r > 0$ and then $\frac{\partial^2 u_o}{\partial \pi^2} <0$, which guarantees a global maximum $\pi^*$.

By solving $\frac{\partial u_o}{\partial \pi} =0$ in (\ref{price1}), we get the value of $\pi^*$ in (\ref{optiprice}), which completes the proof.

Notice that the charge price should be a positive value $\pi^*>0$, therefore the result $\frac{-\sqrt{\frac{r}{t}} + r}{S+r}$ is dropped.
\end{proof}
To ensure a positive total charge and also that the total caching requests do not exceed the caching capacity $S$, i.e., the cost $C_o$ is a positive limited value, the charge price given by the \ac{MNO} should be restricted in the following range.
\begin{corollary}\label{priceregion}
The caching capacity $S$ of the \ac{HetNet} can be fully exploited if the charge price given by the \ac{MNO} follows the range
\begin{equation}\label{pricerange}
\frac{r}{S+r} <\pi < 1.
\end{equation}
\end{corollary}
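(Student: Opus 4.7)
The plan is to derive the range of $\pi$ directly from the two qualitative requirements stated just before the corollary: (a) the total charge collected by the MNO must be strictly positive, which, given $R_o = \pi \sum_m q_m^{NE}$ with $\pi > 0$, forces every equilibrium demand to be strictly positive; and (b) the total cached volume must stay strictly below $S$, so that the barrier cost $C_o$ in (\ref{c1}) remains finite. Both conditions will be read off directly from the closed-form NE in Theorem~1, with no further optimization required.

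For the upper bound, I would use (\ref{qmNE}) to write each equilibrium demand as $q_m^{NE} = (\tfrac{1}{\pi} - 1)\, a_m b_m / D$, where $a_m$, $b_m$, and $D$ depend only on $(\alpha_1,\dots,\alpha_M)$ and are positive under the standing assumption $\alpha_m \ge M$. Strict positivity of every $q_m^{NE}$ is then equivalent to $\tfrac{1}{\pi} - 1 > 0$, i.e., $\pi < 1$. For the lower bound, I would substitute the NE into the aggregate cached volume using (\ref{pp}):
\begin{equation*}
d(\mathcal{Q}_{\mathcal{M}}) = \sum_{m=1}^M q_m^{NE} f(p_m) = \Bigl(\tfrac{1}{\pi} - 1\Bigr) \sum_{m=1}^M \frac{a_m b_m f(p_m)}{D} = \Bigl(\tfrac{1}{\pi} - 1\Bigr) r.
\end{equation*}
Imposing $d(\mathcal{Q}_{\mathcal{M}}) < S$ to keep $C_o$ finite yields $(\tfrac{1}{\pi} - 1)\, r < S$, which rearranges to $\pi > r/(S+r)$. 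Combining the two bounds then gives (\ref{pricerange}).

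The derivation itself is only algebra; the real point to get right is the interpretation of \emph{fully exploited}. It does not mean driving $d(\mathcal{Q}_{\mathcal{M}})$ to $S$ (which would make $C_o$ blow up and is ruled out by the barrier in (\ref{c1})), but rather choosing $\pi$ in the open interval in which every CP is active and the capacity constraint holds strictly. Verifying that both endpoints are indeed strict, and that the positivity of $a_m$, $b_m$, and $D$ used in the substitution is consistent with the hypotheses $\alpha_m \ge M$, is the only sanity check required, and is the closest thing to an obstacle in this short argument.
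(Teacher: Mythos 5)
Your proposal is correct and follows essentially the same route as the paper: the paper also obtains $\pi<1$ by requiring $\tfrac{1}{\pi}-1>0$ (positive charged quantity) and $\pi>r/(S+r)$ by requiring $S-(\tfrac{1}{\pi}-1)r>0$ (finite barrier cost), reading both conditions off the rewritten utility $u_o=(1-\pi)t-\tfrac{1}{S-(\frac{1}{\pi}-1)r}$. Your version merely spells out the same two inequalities at the level of the individual $q_m^{NE}$ and the aggregate $d(\mathcal{Q}_{\mathcal{M}})$, which is a harmless (and arguably clearer) restatement.
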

\begin{proof}
$\pi < 1$ is proved by ensuring $\frac{1}{\pi} - 1 >0$ in (\ref{uo}). $\frac{r}{S+r} < \pi$ is proved by ensuring $\frac{1}{S - (\frac{1}{\pi} -1)r} > 0$ in (\ref{uo}).
\end{proof}

\begin{remark}
The optimal price $\pi^*$ provided in (\ref{optiprice}) always satisfies the feasible price range in (\ref{pricerange}).
\end{remark}
\begin{figure*}[!ht]
\centering
\scalebox{0.95}{
\subfloat[ \label{fig:convergence-3}]{%
	\includegraphics[width=0.33\linewidth]{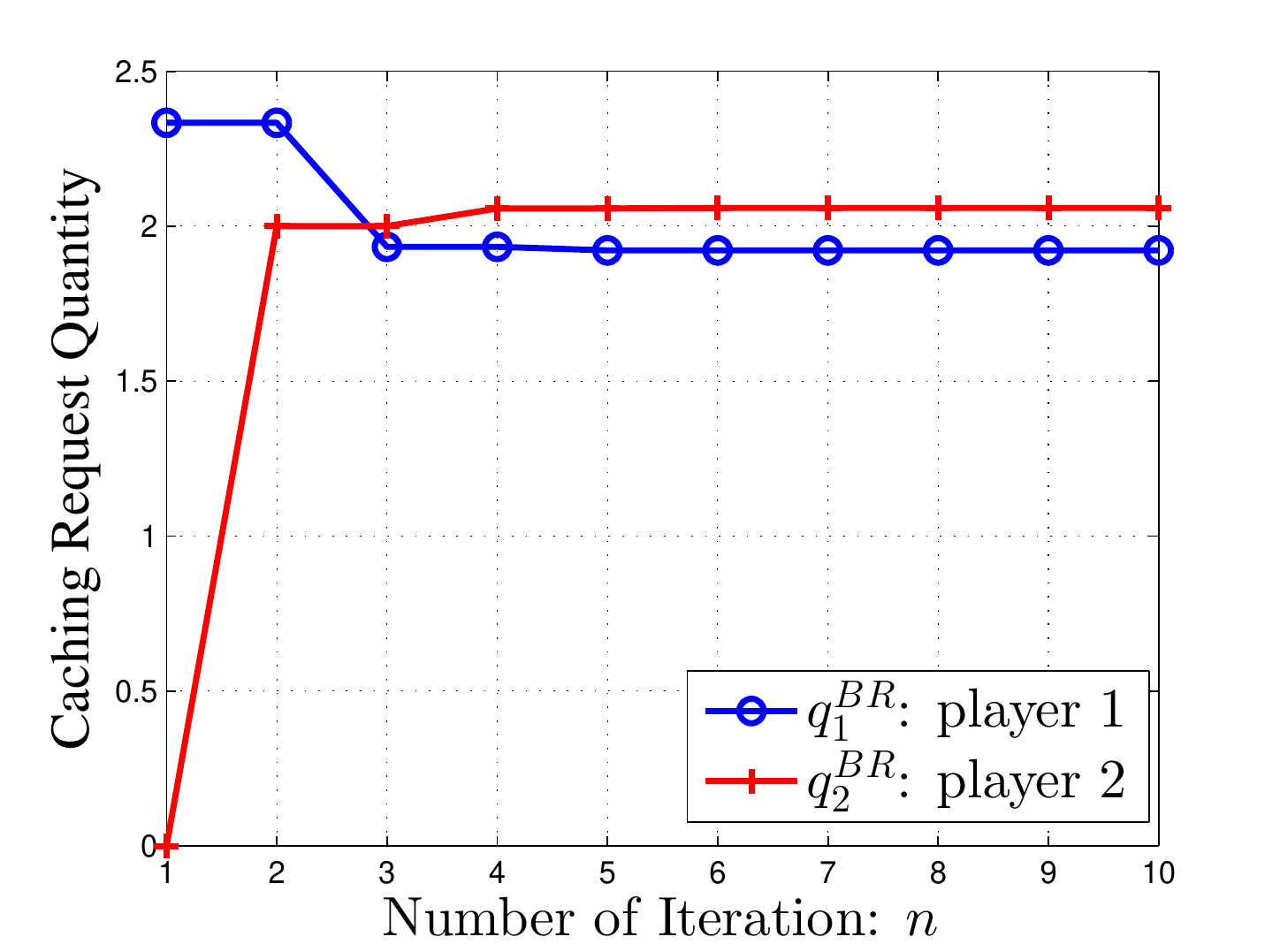}
}
\subfloat[ \label{fig:price}]{%
	\includegraphics[width=0.33\textwidth]{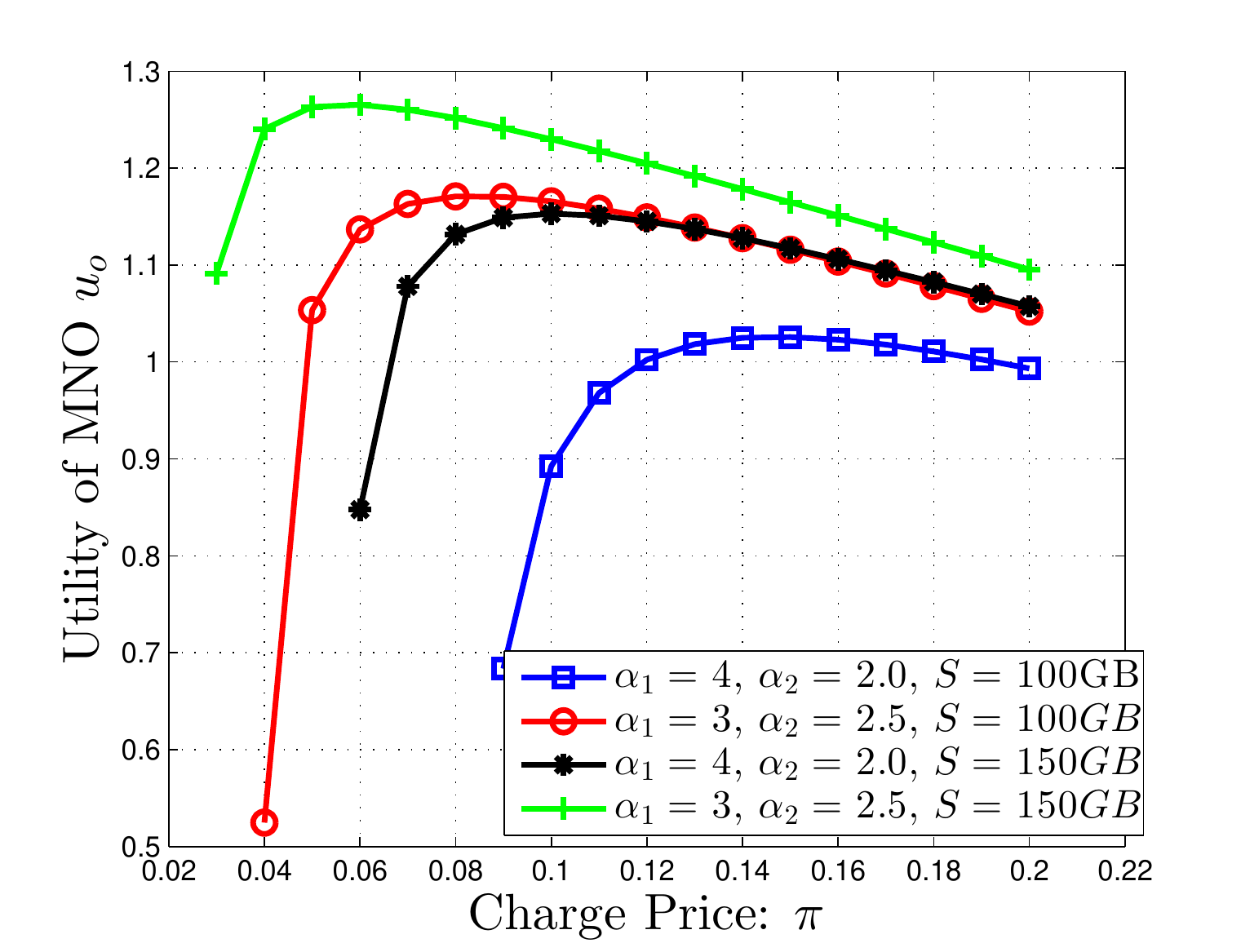}
}
\subfloat[ \label{fig:Ucp}]{%
	\includegraphics[width=0.33\linewidth]{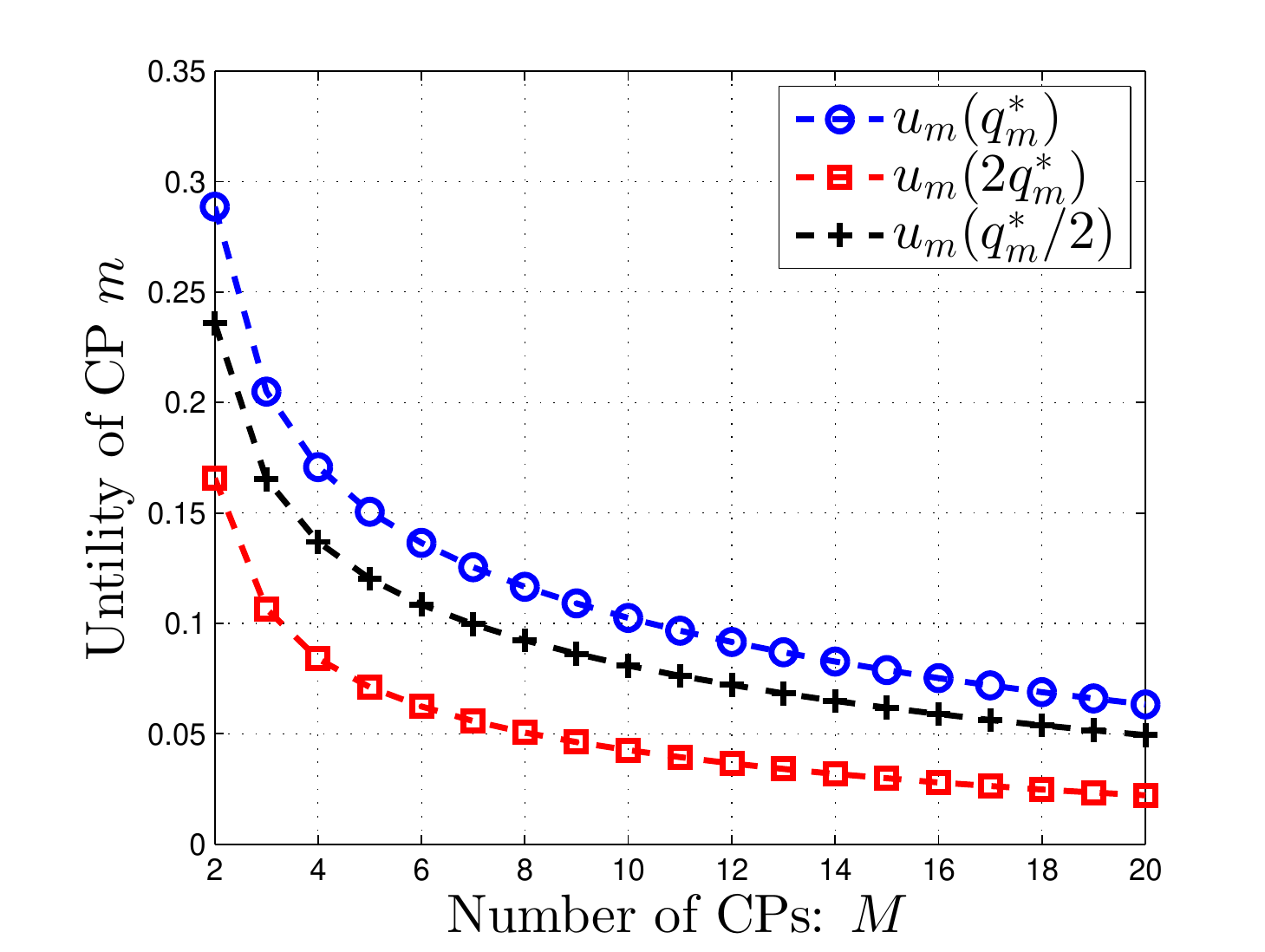}
}
}
\caption{\small Numerical results for a) convergence of the best response for $2$-\ac{CP} case, b) utility function of the \ac{MNO} with respect to charge price $\pi$ for $2$-\ac{CP} case, and c) utility of \ac{CP} with respect to the total number of \glspl{CP} $M$. \vspace{-0.5cm}}
\label{fig:numresults}
\end{figure*}
\vspace{-0.3cm}
\subsection{Example}\label{sec:ex}
\vspace{-0.1cm}
In this part, we show the $2$-\ac{CP} case as an example to illustrate the incentive Stackelberg game on caching. The protocol of the proposed Stackelberg game played between the \ac{MNO} and \glspl{CP} and the non-cooperative game played among the \glspl{CP} are described as follows.
\begin{itemize}
\item The \ac{MNO} predicts the \ac{NE} quantities of the caching requests from all the \glspl{CP} before making its own strategy.
    \begin{eqnarray}
    q_1^{NE} &=& \frac{(\frac{1}{\pi}) (\alpha_1 - 1) \alpha_2}{\alpha_1 \alpha_2 -1},\label{NE1}\\
    q_2^{NE} &=& \frac{(\frac{1}{\pi}) (\alpha_2 - 1) \alpha_1}{\alpha_1 \alpha_2 -1}.\label{NE2}
    \end{eqnarray}
\item The optimal charge price $\pi^{*}$ is the strategy of the \ac{MNO} by maximizing its utility $u_o$.
    \begin{eqnarray}
    \pi^* = \frac{r + \sqrt{\frac{r}{t}}}{r+S},
    \end{eqnarray}
    where $r = \sum_{m=1}^2 \frac{(\alpha_m -1) \alpha_{l\ne m}}{\alpha_1 \alpha_2 -1}$, $t = \sum_{m=1}^2 \frac{(\alpha_m -1) \alpha_{l\ne m}}{\alpha_1 \alpha_2 -1} f(p_m)$ and $f(p_m) = p_m +\frac{\Delta p_m}{2}$.
\item Given the charge price for each file of the caching request, each \ac{CP} chooses a quantity of caching files as its best response.
    \begin{eqnarray}
    q_1^{BR} &=& (\frac{1}{\pi} -1) - \frac{q_2}{\alpha_1}\nonumber\\
    q_2^{BR} &=& (\frac{1}{\pi} -1) - \frac{q_1}{\alpha_2}.\nonumber
    \end{eqnarray}
\item The BR quantities of both \glspl{CP} converge to the \ac{NE} quantities in (\ref{NE1}) and (\ref{NE2}), respectively. The resulting $q_m^{NE}$ maximizes both the utilities of the \ac{MNO} and \glspl{CP}.
\end{itemize}

\vspace{-0.15cm}
\section{Numerical Results}
\label{sec:numerical}
\vspace{-0.1cm}
In order to illustrate the outcome of the proposed game-theoretical cache problem, without loss of generality, we simulate the scenario of one \ac{MNO} and $2$ \glspl{CP} with different parameters. We assume identical file size as $10 GB$ for high quality videos.

Fig. \ref{fig:convergence-3} shows the \ac{BR} dynamic of the non-cooperative game for the $2$-\ac{CP} case. The parameters of the simulations are as follows: $\alpha_1 = 5$, $\alpha_2 = 7$. The charge price provided by the \ac{MNO} is set as $\pi = 0.3$. The initial value of \ac{CP} $2$ is set to be $q_2^{int} = 0$. We can observe that the proposed non-cooperative game converges very rapidly. The convergence is irrespective of the initial points. The convergence values result in the same quantities as the theoretical \ac{NE} solutions given in (\ref{qmNE}) in \textbf{Thereom 1}.

Fig. \ref{fig:price} shows the utility function of the \ac{MNO} with respect to the charge price. We can see that the proposed utility of the \ac{MNO} always admits a global optimum for different sets of parameters. The utilities of the \ac{MNO} with higher caching capacity are higher. This is because more caching requests can be served and less caching cost is spent for the same amount of caching files. The difference of the starting points of the curves are due to the feasible price region in \textbf{Corollary \ref{priceregion}}. The global optimum in Fig. \ref{fig:price} result in the same price derived in (\ref{optiprice}). With the optimum price, the utility of the \ac{MNO} improves even up to $50\%$ than arbitrary chosen prices.

Fig. \ref{fig:Ucp} shows the utility of a single \ac{CP} with respect to the total number of \glspl{CP} $M$ while the quantity of the caching request $q_m$ changes. The charge price is provided as the optimal price $\pi^*$ for different total number of \glspl{CP}. The optimal quantity $q_m^*$ is derived accordingly. The utilities with double quantity $2 q_m^*$ and half quantity $\frac{q_m^*}{2}$ are provided for comparison, respectively. From Fig. \ref{fig:Ucp}, we observe that the more \glspl{CP}, the lower the utility $u_m$ of each \ac{CP}. This is due to the increasing number of CPs which results in the increase of the amount of storage space that is requested by the CPs and thus, a higher price is charged by the MNO. Moreover, the utility of each CP decreases according to (8) when the total storage capacity allocated for the other CPs increases. We can also see that by requesting the optimal caching quantity $q_m^*$, each \ac{CP} achieves $20\%$ higher utility than requesting $\frac{q_m^*}{2}$ and up to $50\%$ than requesting $2q_m^*$. 
\vspace{-0.15cm}
\section{Conclusions}
\label{sec:conclusions}
We have studied a Stackelberg game of proactive edge caching between the leader \ac{MNO} and the followers \glspl{CP}. The best response, the resulting \ac{NE} caching quantities, and the optimum charge price have been derived in closed forms; and the convergence of our proposed incentive mechanism has been validated via numerical studies. These numerical results also showed that both the \ac{MNO} and \glspl{CP} can achieve up to $50\%$ higher utilities in the proposed Stackelberg game. This clearly points out the need of incentive caching mechanisms in $5$G wireless networks.
\bibliographystyle{IEEEtran}
\vspace{-0.2cm}
\bibliography{references}
\end{document}